\declaretheorem{theorem}
\declaretheorem[sibling=theorem]{lemma}
\declaretheorem[sibling=theorem]{corollary}
\newcommand{\E}{\mathbf{E}}
\newcommand{\Sref}[1]{\hyperref[#1]{\S\ref*{#1}}}
\newcommand{\p}[1]{\left( #1 \right)}
\newcommand{\eps}{\varepsilon}
\title{Lossless Derandomization for Undirected Single-Source Shortest Paths and Approximate Distance Oracles}
\author{
    Shuyi Yan
    \thanks{BARC, University of Copenhagen. Supported by VILLUM Foundation Grant 54451. Email: \texttt{shya@di.ku.dk}.}
}
\date{}
\begin{document}

\begin{titlepage}
    \thispagestyle{empty}
    \maketitle
    \begin{abstract}
        \thispagestyle{empty}
        A common step in algorithms related to shortest paths in undirected graphs is that, we select a subset of vertices as centers, then grow a ball around each vertex until a center is reached.
We want the balls to be as small as possible.
A randomized algorithm can uniformly sample $r$ centers to achieve the optimal (expected) ball size of $\Theta(n/r)$.
A folklore derandomization is to use the $O(\log n)$ approximation for the set cover problem in the hitting set version where we want to hit all the balls with the centers.

However, the extra $O(\log n)$ factor is sometimes too expensive. For example, the recent $O(m\sqrt{\log n\log\log n})$ undirected single-source shortest path algorithm \cite{duan2023randomized} beats Dijkstra's algorithm in sparse graphs, but the folklore derandomization would make it dominated by Dijkstra's.

In this paper, we exploit the fact that the sizes of these balls can be adaptively chosen by the algorithm instead of fixed by the input.
We propose a simple deterministic algorithm achieving the optimal ball size of $\Theta(n/r)$ on average. Furthermore, given any polynomially large cost function of the ball size, we can still achieve the optimal cost on average.
It allows us to derandomize \cite{duan2023randomized}, resulting in a deterministic $O(m\sqrt{\log n\log\log n})$ algorithm for undirected single-source shortest path.

In addition, we show that the same technique can also be used to derandomize the seminal Thorup-Zwick approximate distance oracle \cite{thorup2005approximate}, also without any loss in the time/space complexity.
    \end{abstract}
\end{titlepage}

\section{Introduction}

The single-source shortest path is a fundamental problem in graph theory. Given a graph with $n$ vertices and $m$ edges with non-negative real weights, we want to find the distances from a given source $s$ to all vertices. The textbook Dijkstra's algorithm \cite{dijkstra1959note} with a Fibonacci heap \cite{fredman1987fibonacci} can solve this problem in $O(m+n\log n)$ time.
Recently, \cite{duan2023randomized} gave the first algorithm breaking this bound for undirected sparse graphs. Their algorithm is randomized and runs in $O(m\sqrt{\log n\log\log n})$ time.

Their algorithm uses randomization to solve the following subproblem, which widely appears in algorithms regarding shortest paths, distance oracles and spanners. We want to select a subset of vertices $R$ as centers. Then for each vertex $v$, we grow a ball $B(v)$ until we reach a center, i.e. $B(v)$ contains all vertices that are closer than (all vertices in) $R$ to $v$.
The performance of the algorithm then depends on the size of $R$ and the sizes of the balls. As an example, for \cite{duan2023randomized}, the total time complexity will be $O(|R|\log n + \sum_{v}|B(v)|\log|B(v)|)$\footnote{This is for their main case that $n=\Omega(m)$.}.

For a randomized algorithm, it's easy to achieve the optimal tradeoff between these two sizes. By sampling $r$ centers uniformly at random, the expected size of each ball will be $\Theta(n/r)$\footnote{This is optimal, for example, when the graph is a path.}. There is a folklore derandomization, which first fixes all the balls and then repeatedly selects a new center which hits the largest number of unhit balls.
It introduces an additional $O(\log n)$ factor on the number of centers, and some algorithms cannot afford this. For example, applying this derandomization to \cite{duan2023randomized} will increase the time complexity to $O(m\log n\sqrt{\log\log n})$.

However, it is unnecessary to fix the balls in the beginning. In this paper, we exploit the fact that these balls are growable by the algorithm. We show a very simple derandomization which preserves the optimal randomized bound in an amortized way. The result can be generalized to any polynomially large function of the ball size.
Below, we state our result in the context of derandomizing \cite{duan2023randomized}. The formal and more general results are included in \cref{sec:hitting-balls}.

\begin{theorem}[Informal]
\label{thm:hitting-balls-informal}
    For any $1\le r\le n$, we can deterministically select the center set $R$ with size $|R|=r$, such that $\sum_v|B(v)|\log|B(v)|=O(n\cdot(n/r)\log(n/r))=O((n^2/r)\log(n/r))$, in $O(r+(n^2/r)\log(n/r))$ time.
\end{theorem}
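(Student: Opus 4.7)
My plan is to run a coordinated multi-source Dijkstra and commit a new center the instant enough balls share a common vertex.

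\textbf{Algorithm.} Maintain a global priority queue of events ``push $u$ into $B(v)$'' keyed by $d(v,u)$, together with, for each vertex $u$, a monotone counter $X_u$ equal to the number of balls that have ever added $u$. Process events in distance order; when an event pops for an active ball $B(v)$, add $u$ to $B(v)$, relax its edges, and increment $X_u$. The instant $X_u$ reaches the threshold $T:=n/r$, make $u$ a center and stop every still-active ball containing $u$; ties within a distance level are broken round-robin so that no single ball races ahead of its peers.

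\textbf{Easy bounds.} Each center $u$ is created by the $T$-th ball that adds $u$, and those $T$ balls become inactive at that moment with $u$ as their nearest final center; so every center ``owns'' at least $T=n/r$ balls and the total ownership is at most $n$, giving $|R|\le r$. Since $X_u\le T$ for every $u$ (non-centers by the threshold; centers because once $u\in R$, subsequent arrivals stop at $u$ rather than add it), we also get $\sum_v|B(v)|\le\sum_u X_u\le nT=n^2/r$. The running time comes out to $O(r+\sum_v|B(v)|\log|B(v)|)$ with appropriate priority-queue management, matching the claimed $O(r+(n^2/r)\log(n/r))$ once the cost bound is in hand.

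\textbf{Main obstacle.} The remaining step, $\sum_v|B(v)|\log|B(v)|=O((n^2/r)\log(n/r))$, is the hardest: the crude bound $\log n\cdot\sum_v|B(v)|\le(n^2/r)\log n$ has the wrong logarithm, and even the dyadic rewriting $\sum_v|B(v)|\log|B(v)|=\Theta(\sum_i i\cdot 2^i\cdot k_i)$ with $k_i=|\{v:|B(v)|\ge 2^i\}|$ combined with the weak pigeonhole tail $k_i\le n^2/(r\cdot 2^i)$ still yields only $O((n^2/r)(\log n)^2)$. The plan is to prove an essentially exponential tail bound on $k_i$ for $2^i$ past $n/r$, mirroring the behavior of uniform random sampling, by using that balls grow in lock-step distance order: the $T$ balls halted at a single triggering event are of comparable radius, so no individual ball can outpace the average by more than a constant factor before some center stops it, and balls of size much above $n/r$ can only exist in proportion to the $r$ centers spread across dyadic scales. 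Formalizing this into an amortized per-scale statement --- that past scale $2^i\approx n/r$ the number of still-large active balls contracts geometrically round by round --- is the main technical content, and once in hand summing the resulting geometric tail over dyadic rounds yields the $\log(n/r)$ rather than $\log n$ factor and closes the proof.
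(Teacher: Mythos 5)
Your proposal replaces the proof of the crucial inequality $\sum_v|B(v)|\log|B(v)|=O((n^2/r)\log(n/r))$ with a plan: you acknowledge that the crude bounds only give $(n^2/r)\log n$ or $(n^2/r)\log^2 n$, and the ``exponential tail on the number of balls of size $\ge 2^i$'' that would fix the logarithm is exactly the statement you do not prove. Nothing in the threshold rule $X_u\ge T$ obviously forces such a tail: after a center is created, the counters of nearby vertices can be left frozen strictly below $T$ (all balls through them being dead), and a surviving ball entering such a region gains only $+1$ per vertex it absorbs, so the claim that ``no individual ball can outpace the average by more than a constant factor'' needs a real argument, especially with arbitrary edge weights. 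Moreover, even your ``easy bound'' $|R|\le r$ is not sound as written: since $X_u$ counts every ball that \emph{ever} added $u$, a vertex can reach the threshold via balls that were already stopped by earlier centers, so a new center need not deactivate $T$ fresh balls and the ownership is not exclusive. Indeed the total ownership is $\sum_u X_u=\sum_v|B(v)|$, which you yourself bound only by $n^2/r$, not by $n$, so the count ``at most $n/T=r$ centers'' does not follow (it would if $X_u$ counted only currently active balls and were decremented on deactivation, but that is a different algorithm and would still leave the cost bound open).

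For comparison, the paper avoids per-vertex thresholds entirely: it grows \emph{all currently unhit} balls to a common size $b$ that doubles each round, and in round $i$ runs greedy max-coverage until the number of unhit balls drops by a factor $2^{p+1}$. Because every unhit ball has size $\ge 2^{p+1+i}n/r$, each greedy center hits at least a $2^i/r$ fraction of them, so round $i$ uses at most $r/2^i$ centers (summing to $r$), while the number of surviving balls decays like $2^{-(p+1)i}$ against ball sizes growing like $2^i n/r$; this geometric race is exactly the contraction you were hoping to extract from your lock-step Dijkstra, but it is enforced by construction rather than derived from the dynamics, and it immediately gives both $|R|\le r$ and $\sum_i f(|B_i|)=O_p(m\,f(n/r))$, with the $x\log x$ cost handled via Jensen from the case $p=2$. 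To salvage your approach you would need to prove the per-scale geometric decay of large active balls for your process (or cap each ball's growth explicitly), which is precisely the missing technical content.
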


Combining \cref{thm:hitting-balls-informal} with the remaining deterministic part of \cite{duan2023randomized}, we get a deterministic algorithm with the same time complexity. The details are included in \cref{sec:sssp}.

\begin{theorem}
\label{thm:sssp}
    In an undirected graph with non-negative edge weights, there is a deterministic comparison-addition algorithm that solves the single-source shortest path problem in $O(m\sqrt{\log n\log\log n})$ time.
\end{theorem}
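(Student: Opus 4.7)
The plan is to run the algorithm of \cite{duan2023randomized} essentially as-is, substituting our deterministic center-selection procedure from \cref{thm:hitting-balls-informal} for the only randomized subroutine. First I would revisit the structure of \cite{duan2023randomized} and identify where randomness enters: it is used exclusively to sample a center set $R$ so that for every vertex $v$ the ball $B(v)$ (containing all vertices strictly closer to $v$ than the nearest center) has expected size $\Theta(n/|R|)$, and the final running time is controlled by $O(|R|\log n + \sum_v |B(v)|\log|B(v)|)$ in their main parameter regime. Aside from this step, every component (BFS-like ball growth, the clustering and bundle construction, the bucketing machinery, etc.) is already deterministic, so no further derandomization is needed.

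Next I would plug in \cref{thm:hitting-balls-informal}. Choosing the same parameter $r = |R|$ as in \cite{duan2023randomized}, the theorem deterministically produces an $R$ with $|R|=r$ while guaranteeing $\sum_v |B(v)|\log|B(v)| = O((n^2/r)\log(n/r))$ and taking $O(r + (n^2/r)\log(n/r))$ time. This matches the bound that \cite{duan2023randomized} needed only in expectation, and the time for the selection itself is dominated by the rest of their algorithm. With their original choice of $r$ optimized to yield $O(m\sqrt{\log n\log\log n})$, the whole pipeline still evaluates to $O(m\sqrt{\log n\log\log n})$ time — this is just arithmetic, since we have substituted an equal (deterministic) bound for every quantity the randomized analysis bounded in expectation.

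I would then check the boundary cases that \cite{duan2023randomized} handles separately (notably when $m$ is not $\Theta(n)$, where they invoke auxiliary routines or a different parameter choice). In each case I would verify that the only place randomness appears is the same center-sampling subproblem and that \cref{thm:hitting-balls-informal} applies verbatim; if a different cost function on $|B(v)|$ is invoked in a sub-case, I would appeal to the more general version of \cref{thm:hitting-balls-informal} promised in \cref{sec:hitting-balls}, which handles any polynomially large cost.

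The main obstacle I foresee is purely bookkeeping: making sure that \cite{duan2023randomized} does not rely on probabilistic properties beyond the expected ball-size bound — for example, independence between balls, or concentration — and that our deterministic guarantee is on the same aggregate quantity $\sum_v |B(v)|\log|B(v)|$ that their analysis charges against. Once that is confirmed, the proof is a direct substitution, and the deterministic comparison-addition time bound follows with no loss.
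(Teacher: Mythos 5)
Your proposal is correct and follows essentially the same route as the paper: derandomize only the center selection in the bundle construction by invoking the hitting-growable-balls guarantee (the paper uses \cref{cor:hit} with cost $f(x)=x\log x$, i.e.\ the formal version of \cref{thm:hitting-balls-informal}), keep the deterministic bundle Dijkstra stage unchanged, and set $r=m\sqrt{\log\log n}/\sqrt{\log n}$ (after the constant-degree reduction the vertex count is $O(m)$, which handles the density issue you flag). The one point worth making explicit, which the paper does, is that the balls are grown \emph{interleaved} with center selection via partial Dijkstra steps rather than after $R$ is fixed, so the guarantee is exactly on the aggregate $\sum_v|B(v)|\log|B(v)|$ that the analysis of \cite{duan2023randomized} charges.
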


Our derandomization can also be applied to some other randomized algorithms that involve similar center-ball techniques.
An example is the seminal Thorup-Zwick approximate distance oracle \cite{thorup2005approximate}. For any integer $k\ge 1$, \cite{thorup2005approximate} can preprocess an undirected non-negatively weighted graph in $O(kn^{1/k}(m+n\log n))$ expected time, constructing a data structure of size $O(kn^{1+1/k})$, which is able to answer any pairwise distance query, of stretch $2k-1$\footnote{It means that, if the real distance between $u$ and $v$ is $d(u,v)$, then the estimated distance $\hat{d}(u,v)$ satisfies $d(u,v)\le\hat{d}(u,v)\le(2k-1)d(u,v)$.}, in $O(k)$ time.

The original derandomization in \cite{thorup2005approximate} takes $\tilde{O}(mn)$ time and increases the space complexity to $O(kn^{1+1/k}\log n)$. Later, \cite{roditty2005deterministic} proposed a better derandomization, which only loses one $O(\log n)$ factor in the time complexity and has no (asymptotic) loss in the space complexity. However, using our method, this last $O(\log n)$ factor can also be removed. The details are included in \cref{sec:do}.

\begin{theorem}
\label{thm:do}
    The Thorup-Zwick approximate distance oracle of size $O(kn^{1+1/k})$ can be deterministically constructed in $O(kn^{1/k}(m+n\log n))$ time.
\end{theorem}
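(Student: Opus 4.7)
My plan is to replace each independent sampling step in the Thorup--Zwick preprocessing with the deterministic center selection from \cref{thm:hitting-balls-informal}, in its general form from \cref{sec:hitting-balls}. The remaining deterministic components --- the virtual-source Dijkstra that computes $d(v, A_i)$ for all $v$, and the modified Dijkstras that grow clusters $C(w) = \{v : d(w,v) < d(v, A_{i+1})\}$ from each $w \in A_i \setminus A_{i+1}$ --- are reused verbatim from the randomized construction.

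I would process the hierarchy $V = A_0 \supseteq \cdots \supseteq A_k = \emptyset$ from $i = k-1$ down to $i = 0$. At level $i$, given $A_i$, I invoke the derandomization with candidate universe $A_i$ and target size $r = \lceil |A_i| \cdot n^{-1/k} \rceil$, where the ball around $v$ is $B_i(v) = \{w \in A_i : d(v,w) < d(v,A_{i+1})\}$. The theorem then deterministically returns $A_{i+1} \subseteq A_i$ with $|A_{i+1}| = r$ satisfying $\sum_v |B_i(v)| = O(n \cdot n^{1/k})$, matching the randomized expected bound. Summing over $k$ levels gives total bunch size $O(k n^{1+1/k})$, which yields the claimed space.

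For the running time, each level contributes $O(m + n\log n)$ for the virtual-source Dijkstra plus $O(\sum_w (|E(C(w))| + |C(w)| \log |C(w)|))$ for the modified Dijkstras. Using the duality $\sum_w |C(w)| = \sum_v |B_i(v)|$, the logarithmic term is $O(n^{1+1/k} \log n)$ immediately. For the edge term, $\sum_w |E(C(w))| \le \sum_v \deg(v) \cdot |B_i(v)|$, which I would bound by instantiating the general cost function of \cref{sec:hitting-balls} with vertex weights proportional to $\deg(v)$ to get $\sum_v \deg(v) |B_i(v)| = O(m \cdot n^{1/k})$. The center-selection routine itself runs in time dominated by the per-level Dijkstra work by \cref{thm:hitting-balls-informal}. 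Summing over $k$ levels gives the claimed $O(k n^{1/k}(m + n\log n))$ preprocessing time.

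The main obstacle will be verifying that the generalized derandomization of \cref{sec:hitting-balls} is flexible enough to support both (i) a candidate-center universe $A_i$ that is a proper subset of $V$, and (ii) a weighted per-vertex cost function --- here essentially $\deg(v) \cdot g(|B_i(v)|)$ --- rather than only a function of the ball size alone. Granted these two extensions, which the informal discussion around arbitrary polynomial cost functions strongly suggests, the deterministic construction fits exactly into the Roditty--Thorup--Zwick framework and preserves the full randomized bound with no asymptotic loss in either time or space.
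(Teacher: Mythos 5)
The high-level plan — replace the random sampling of $A_{i+1}\subseteq A_i$ at each level by the deterministic selection of \cref{sec:hitting-balls}, then reuse the rest of the Thorup--Zwick construction — is the right one and matches the paper. But the proposal has one genuine gap and a couple of smaller misfires.

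The gap is that you never say how a single ``grow $B_i(v)$ by one'' operation inside \cref{alg:hit-ball} is implemented when the candidate universe is $A_i\subsetneq V$, and this is precisely the technical crux of the paper's proof. Growing $B_i(v)$ means adding the next-closest vertex \emph{of $A_i$} to $v$; a naive partial Dijkstra from $v$ would traverse arbitrarily many vertices outside $A_i$ before reaching the next element of $A_i$, destroying the running time. The cluster-growing Dijkstras you describe as ``reused verbatim'' are the step one performs \emph{after} $A_{i+1}$ is already known; they cannot serve as the grow primitive, because $A_{i+1}$ is exactly what \cref{alg:hit-ball} is in the process of choosing — a chicken-and-egg issue the paper explicitly flags. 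The paper resolves it by invoking the Roditty--Zwick batch-grow procedure (\cref{alg:grow-ball}, \cref{lem:grow-all-balls}), which in one auxiliary SSSP computation of cost $O(m+n\log n)$ advances \emph{every} ball from $S_{i,j}(\cdot)$ to $S_{i,j+1}(\cdot)$, and then proves a refinement (\cref{lem:grow-unhit-balls}) that this auxiliary computation can be restricted to the $t$ vertices whose balls are still unhit — exploiting the fact that inside \cref{alg:hit-ball} all unhit balls have the same size $j$ and that, in the auxiliary graph, vertices with hit balls are disconnected from vertices with unhit balls. That yields the amortized $O(m/n+\log n)$ cost per grow that the time bound rests on. Without some argument of this kind your proof doesn't go through.

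Two smaller points. First, the levels must be processed from $i=0$ up to $i=k-1$, since $A_{i+1}$ is selected from $A_i$ (with $i=k-1$ handled trivially because $A_k=\emptyset$); processing from $k-1$ down to $0$ is not well-defined. Second, the detour through a degree-weighted cost function is unnecessary and also not actually supported by \cref{thm:hitting-ball}, whose cost is a function of ball size alone: the paper's bounded-degree preprocessing forces every $\deg(v)=O(m/n)$, so $\sum_v\deg(v)|B_i(v)|=O((m/n)\sum_v|B_i(v)|)$ already follows from the unweighted $p=1$ bound $\sum_v|B_i(v)|=O(n^{1+1/k})$.
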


We remark that \cref{thm:do} can be combined with other improved query algorithms for the Thorup-Zwick oracle. For example, combining it with \cite{wulff2013approximate} will give us $O(\log k)$ query time.

\subsection{Other Related Works}

\paragraph{Single-Source Shortest Path.} Very recently, \cite{DBLP:conf/stoc/DuanMMSY25} proposed an $O(m\log^{2/3}n)$-time deterministic algorithm for directed graphs, using different techniques from \cite{duan2023randomized}.

In the word RAM model with integer weights, there are a sequence of works \cite{DBLP:journals/jcss/FredmanW93,DBLP:journals/jcss/FredmanW94,DBLP:conf/esa/Raman96,DBLP:journals/sigact/Raman97,DBLP:journals/jal/Thorup00,DBLP:journals/siamcomp/Thorup00,DBLP:conf/icalp/Hagerup00} leading to a linear-time algorithm for undirected graphs \cite{DBLP:journals/jacm/Thorup99} and an $O(m+n\log\log\min\{n,C\})$-time algorithm for directed graphs \cite{DBLP:journals/jcss/Thorup04} where $C$ is the maximum edge weight.

\paragraph{Approximate Distance Oracle.} The size-stretch tradeoff of the Thorup-Zwick oracle \cite{thorup2005approximate} is optimal up to a factor of $k$, assuming a widely believed girth conjecture \cite{erdos1964extremal}. Besides, it is the first approximate distance ``oracle'' that achieves a constant query time for any fixed stretch. Before that, the query time was $\Omega(n^{1/k})$ \cite{DBLP:journals/dcg/AlthoferDDJS93,DBLP:journals/siamcomp/AwerbuchBCP98,DBLP:journals/siamcomp/Cohen98,matouvsek1996distortion}.

The query time of the Thorup-Zwick oracle was improved to $O(\log k)$ by \cite{wulff2013approximate}.
A line of research achieved a universal constant query time and $O(n^{1+1/k})$ size \cite{DBLP:conf/focs/MendelN06,DBLP:journals/cjtcs/MendelS09,wulff2013approximate,DBLP:conf/stoc/Chechik14,DBLP:conf/stoc/Chechik15} by designing other approximate distance oracles. However, they have worse preprocessing times and/or worse stretches.

There are also many further improvements in various special settings \cite{DBLP:journals/talg/BaswanaS06,DBLP:conf/focs/BaswanaK06,DBLP:conf/icalp/BaswanaGSU08,DBLP:journals/siamcomp/BaswanaK10,DBLP:conf/soda/Wulff-Nilsen12,DBLP:journals/siamcomp/PatrascuR14,DBLP:conf/focs/PatrascuRT12,DBLP:conf/soda/AgarwalG13,DBLP:conf/esa/Agarwal14,roditty2023approximate,DBLP:conf/soda/ChechikZ22,DBLP:conf/icalp/KopelowitzKR24}, e.g. for sparse/dense graphs, special stretches, etc.
\section{Hitting Growable Balls}
\label{sec:hitting-balls}

We define the problem of hitting growable balls as follows.
There are $n$ vertices (elements), $m$ balls (sets) which are initially empty, a parameter $r\in[1,n]$ and a cost function $f(\cdot)$. The algorithm can select a ball (which does not contain all vertices) and ask the adversary to include one new vertex (determined by the adversary) in it. The algorithm can grow the balls as many times as it wants. In the end, it needs to select $r$ vertices as centers, such that every ball is hit by the centers, i.e. each ball contains at least one center.
The objective is to minimize the total cost $\sum_{i=1}^m f(|B_i|)$, where $B_i$ denotes the $i$-th ball.

The cost function is typically positive and convex.
For simplicity of presentation, in the remaining part of this section, we assume that the cost function is $f(x)=x^p$ for some constant $p\ge 1$. Our results could be extended to other polynomially large cost functions simply by applying Jensen’s inequality.

A randomized algorithm could simply select the centers at random, and then grow each ball until it contains a center. Against an oblivious adversary, the expected cost will be $O_p(m\cdot f(n/r))$\footnote{$O_p(\cdot)$ indicates that the constant hidden in the $O$-notation may depend on $p$.}, which is (asymptotically) optimal since the adversary can easily force the average ball size to be $\Omega(n/r)$\footnote{An example is that, when $B_i$ grows in the $j$-th time, the adversary let it include the vertex $v_{(i+j)\ \text{mod}\ n}$.}.

Surprisingly, we have a simple deterministic algorithm achieving the same bound. The idea is as follows.
When each unhit ball has a size of $n/r$, we can use $O(r)$ centers to hit a constant fraction, say $1-\eps$, of unhit balls, by repeatedly selecting a new center that hits the largest number of unhit balls. Repeating it $O(\log m)$ times, we can hit all the balls with $O(r\log m)$ centers. To remove the $O(\log m)$ factor, we exponentially reduce the number of new centers in each round. In the $i$-th round, we only use $O(r/2^i)$ centers. This means that each unhit ball should have size $2^in/r$. However, only $\eps^i$ fraction of the balls remain unhit in the $i$-th round. By setting $\eps<1/2$, the average ball size would still be $O(n/r)$.

The pseudocode of our algorithm is given below. Recall that the objective is to minimize the total cost $\sum_{i=1}^m f(|B_i|)$ where the cost function is $f(x)=x^p$.

\begin{algorithm}[H]
    \caption{}
    \label{alg:hit-ball}
    \hspace*{\algorithmicindent} \textbf{Input:} number of vertices $n$, number of balls $m$, target number of centers $r$, cost parameter $p$. \\
    \hspace*{\algorithmicindent} \textbf{Output:} at most $r$ centers that hit all the balls.
    \begin{algorithmic}[1]
        \State Initially, all balls are empty and no centers are selected.
        \State $b\gets \lceil 2^{p+2}n/r \rceil$.
        \State Grow each ball to the size $\min\{b,n\}$.
        \While{not all the balls are hit}
            \State $m'\gets$ the number of unhit balls.
            \State Repeatedly select a new center that hits the largest number of unhit balls, until the number of unhit balls is at most $m'/2^{p+1}$.
            \State $b\gets 2b$.
            \State Grow each unhit ball to the size $\min\{b,n\}$.
        \EndWhile
        \State \Return all selected centers.
    \end{algorithmic}
\end{algorithm}

\begin{theorem}
\label{thm:hitting-ball}
    For any $p\ge1$ and $1\le r\le n$, \cref{alg:hit-ball} selects at most $r$ centers that hit all the balls at a total cost $O_p(m(n/r)^p)=O_p(m\cdot f(n/r))$, and it can be implemented in $O_p(r+mn/r)$ time.
\end{theorem}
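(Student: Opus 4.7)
The plan is to analyze \cref{alg:hit-ball} by tracking, across the iterations of the while loop, how the number of unhit balls shrinks geometrically while the ball-size threshold $b$ grows geometrically, and showing that the two geometric series cancel against each other in the cost. Index the iterations starting at $i=0$, let $b_i \defeq 2^i b_0$ with $b_0 \defeq \lceil 2^{p+2} n/r\rceil$, and observe that at the beginning of iteration $i$ every unhit ball has size $\min\{b_i, n\}$. By the stopping rule in line~6, the number of unhit balls is divided by at least $2^{p+1}$ in each iteration, so a simple induction yields at most $m / 2^{i(p+1)}$ unhit balls entering iteration $i$.

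For the cost bound I would note that any ball first hit during iteration $i$ has final size at most $b_i$, since an unhit ball's size is only changed by line~8, which stops being applied to it once it becomes hit. Therefore
\[
\sum_j |B_j|^p \;\le\; \sum_{i \ge 0} \frac{m}{2^{i(p+1)}} \cdot b_i^p \;=\; m\, b_0^p \sum_{i \ge 0} 2^{-i} \;=\; 2m\, b_0^p \;=\; O_p\!\bigl(m(n/r)^p\bigr),
\]
which is precisely where the constants $2^{p+2}$ and $2^{p+1}$ are tuned to make the two exponentials cancel.

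For the center count I would invoke the standard greedy set-cover/pigeonhole bound: when $k$ unhit balls each have size at least $b_i$, some vertex is contained in at least $kb_i/n$ of them, so each greedy pick shrinks the unhit count by a factor of $(1 - b_i/n)$ and iteration $i$ uses at most $\lceil (p+1)(\ln 2)\, n / b_i \rceil$ centers. Summing over $i$ and using $n/b_0 \le r/2^{p+2}$ bounds the geometric-series contribution by $(p+1)(\ln 2)\, r / 2^{p+1}$, plus $O(\log r)$ to absorb one ceiling per iteration. Because $(p+1)/2^{p+1} \le 1/2$ for every $p\ge 1$, the total comes out at most $r$ for all sufficiently large $r$; the remaining range (those $r$ small enough that $b_0 \ge n$ already) is handled directly, since then every ball becomes the whole vertex set and a single center suffices.

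For the running time I would implement the greedy step with a bucket-based structure supporting $O(1)$ increments, decrements, and maximum queries on the counter ``number of currently unhit balls containing $v$'' for each vertex $v$. The total number of counter updates is proportional to $\sum_j |B_j|$, which by the same geometric series with $p=1$ is $O_p(mn/r)$; adding $O(r)$ for writing out the centers gives the stated $O_p(r + mn/r)$ bound. The main obstacle I anticipate is not in the high-level analysis but in the arithmetic bookkeeping needed to ensure that the ceilings in $\lceil (p+1)(\ln 2)\, n/b_i\rceil$ and in $b_0$ do not push the center count strictly above $r$; this is routine but leans on the $O(\log r)$ bound on the number of iterations together with the factor-of-two slack $(p+1)/2^{p+1} \le 1/2$.
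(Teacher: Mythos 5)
Your cost and running-time analyses follow essentially the same route as the paper: the same round-by-round bookkeeping (unhit count drops by a factor $2^{p+1}$ while the size threshold doubles), the same overcounting sum $\sum_i u_i b_i^p \le 2m b_0^p = O_p(m(n/r)^p)$, and the same bucket structure giving $O(1)$ per counter update with $O(\sum_j |B_j|) = O_p(mn/r)$ updates in total. Where you genuinely diverge is the bound of at most $r$ centers. You use the classical multiplicative-decay greedy analysis (each pick removes a $b_i/n$ fraction of the currently unhit balls, so round $i$ needs at most $\lceil (p+1)(\ln 2)\, n/b_i\rceil$ picks), which is what forces the $\ln 2$ slack, one ceiling per round, and a separate small-$r$ case. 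The paper avoids all of this with a direct count: while the greedy in round $i$ is still running, more than $m'/2^{p+1}$ balls are unhit and each has size at least $2^{p+1+i}n/r$, so by pigeonhole the chosen vertex hits at least $\frac{m'}{2^{p+1}}\cdot\frac{2^{p+1+i}n/r}{n}=2^i m'/r$ balls; since only $m'$ balls can become hit in that round, at most $r/2^i$ centers are selected there, and $\sum_{i\ge 1} r/2^i \le r$ exactly, with no ceilings and no case analysis. Your version does close, but the hedge ``for all sufficiently large $r$'' should be pinned down: whenever $b_0 < n$ one necessarily has $r > 2^{p+2} \ge 16$, the number of rounds is at most $\log_2 r$, and then $(\ln 2)r/2 + \log_2 r \le r$; whereas $b_0 \ge n$ forces a single round with a single center. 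So the two regimes do cover every $r$ and your argument is correct, just looser than the paper's counting argument, which is worth knowing since the exact bound $|R|\le r$ (not $O(r)$) is what the applications use.
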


\begin{proof}
    As long as $b\ge n$, the algorithm will terminate in one step. Before that, there are at most $\log r$ rounds.
    In the $i$-th round, each unhit ball has a size of at least $2^{p+1+i}n/r$. When we select a new center, there are at least $m'/2^{p+1}$ unhit balls by definition. So, a new center selected in the $i$-th round will hit at least
    $$\frac{m'}{2^{p+1}}\cdot\frac{2^{p+1+i}n}{r}\cdot\frac1n = \frac{2^im'}{r}$$
    unhit balls. On the other hand, there are only $m'$ unhit balls. So, at most $r/2^i$ new centers are selected in the $i$-th round, which means the total number of selected centers is at most $r$.

    In each round, the number of unhit balls is divided by at least $2^{p+1}$. So, at the beginning of the $i$-th round, at most $m/2^{(i-1)(p+1)}$ balls are unhit, which have grown to the size $2^{p+1+i}n/r$. Therefore, the total cost is at most
    \begin{align*}
        \sum_{i=1}^{\log r} \frac{m}{2^{(i-1)(p+1)}}\cdot f\p{\frac{2^{p+1+i}n}{r}}
        & = \sum_{i=1}^{\log r} \frac{m}{2^{(i-1)(p+1)}}\cdot \p{\frac{2^{p+1+i}n}{r}}^p \\
        & = m(n/r)^p \sum_{i=1}^{\log r} 2^{(p+1)^2-i} \\
        & \le 2^{(p+1)^2}m(n/r)^p \\
        & = O_p(m(n/r)^p).
    \end{align*}

    Let $a_j$ denote how many unhit balls can be hit by vertex $j$. When we grow an unhit ball, either it becomes hit by some current center, or some $a_j$ is increased by $1$. When we hit a ball $B_i$, we decrease $|B_i|$ different $a_j$'s by $1$ since $B_i$ is no longer unhit. Therefore, the total number of increases/decreases is $\sum_{i=1}^m|B_i|=O_p(mn/r)$. Since $a_j$ is only changed by $1$ each time, each increase, decrease and find-max operation can be done in constant time\footnote{For example, we can maintain: (1) $a_j$ for each vertex $j$; (2) a doubly linked list $L_k$ of vertices with $a_j=k$ for each $k$; (3) the largest $k$ such that $L_k$ is non-empty.}. Therefore, the time complexity is $O_p(r+mn/r)$.
\end{proof}

In the applications, we will set the cost function according to the time complexity of the remaining part of the algorithm. In this paper, it is either $f(x)=x$ or $f(x)=x\log x$.
By Jensen's inequality, for any $p>1$, $\frac1m\sum_{i=1}^m|B_i|^p\le(n/r)^p$ would imply $\frac1m\sum_{i=1}^m|B_i|\log|B_i|\le(n/r)\log(n/r)$.
This gives us the following corollary.

\begin{corollary}
\label{cor:hit}
    Let the cost function be $f(x)=x\log x$. For any $1\le r\le n$, \cref{alg:hit-ball} (with e.g. $p=2$) selects at most $r$ centers that hit all the balls at a total cost $O(m(n/r)\log(n/r))=O(m\cdot f(n/r))$ in $O(r+mn/r)$ time.
\end{corollary}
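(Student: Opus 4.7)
The plan is to invoke \cref{thm:hitting-ball} with $p=2$, which already hands us a center set of size at most $r$ hitting every ball, a running time of $O(r+mn/r)$, and the second-moment bound $\sum_{i=1}^{m}|B_i|^{2}=O(m(n/r)^{2})$. What remains is purely analytic: deduce from this $\ell^{2}$ bound the target inequality $\sum_i|B_i|\log|B_i|=O(m(n/r)\log(n/r))$.

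Writing $\mu := n/r$, I would decompose
\[
|B_i|\log|B_i| \;=\; |B_i|\log\mu \;+\; |B_i|\log(|B_i|/\mu).
\]
Summing the first term and applying Cauchy--Schwarz, $\sum_i |B_i|\le\sqrt{m\cdot\sum_i|B_i|^2}$, together with the second-moment bound, gives a contribution of $\log\mu\cdot O(\sqrt{m\cdot m\mu^2})=O(m\mu\log\mu)$. For the second term I would use the elementary inequality $y\log y\le y^2$ valid for every $y\ge 0$ (trivially when $y\le 1$, since $y\log y\le 0$; for $y\ge 1$ it follows from $\log y\le y$). Setting $y=|B_i|/\mu$ and multiplying through by $\mu$ yields $|B_i|\log(|B_i|/\mu)\le |B_i|^2/\mu$, so the second sum is at most $\mu^{-1}\sum_i|B_i|^2=O(m\mu)$, absorbed into $O(m\mu\log\mu)$ once $\mu$ is bounded away from $1$.

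I do not expect any substantive obstacle. The only delicate point is the boundary regime where $\mu$ is close to $1$ and the bound $m\mu\log\mu$ degenerates; there the claim $\sum_i|B_i|\log|B_i|=O(m)$ follows directly from $|B_i|\log|B_i|\le|B_i|^2$ combined with the same second-moment bound, matching the intended reading of the $O(m\cdot f(n/r))$ target. Essentially all of the work has been done by \cref{thm:hitting-ball}; the corollary is a short Jensen/power-mean style post-processing, exactly as foreshadowed by the paragraph preceding its statement.
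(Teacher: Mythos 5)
Your proof is correct, but it takes a genuinely different route from the paper's. The paper's proof is a direct application of Jensen's inequality: writing $f(x)=g(h(x))$ with $h(x)=x^2$ and $g(x)=\sqrt{x}\log\sqrt{x}$, it observes that $g$ is increasing and concave on $[1,\infty)$, so
$\frac1m\sum_i f(|B_i|)=\frac1m\sum_i g(h(|B_i|))\le g\bigl(\frac1m\sum_i h(|B_i|)\bigr)\le g(h(n/r))=f(n/r)$,
where the penultimate inequality uses the $p=2$ bound from \cref{thm:hitting-ball}. Your argument instead splits $|B_i|\log|B_i|=|B_i|\log\mu+|B_i|\log(|B_i|/\mu)$, controls the first piece by deriving a first-moment bound via Cauchy--Schwarz from the second-moment bound, and controls the second piece by the pointwise inequality $y\log y\le y^2$ scaled by $\mu$. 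Both proofs ultimately lean on the same $\ell^2$ estimate $\sum_i|B_i|^2=O(m(n/r)^2)$, and both are equally valid modulo the same harmless constant/boundary subtleties (which you explicitly flag and the paper quietly elides). The paper's Jensen route is shorter and generalizes with no change to any cost function $f$ that factors as $g\circ h$ with $g$ concave increasing and $h(x)=x^p$ — matching the paper's remark that the result extends to other polynomially large cost functions. Your decomposition is more elementary (no need to verify concavity of a specific auxiliary function), but it is hand-tailored to $f(x)=x\log x$; for a different $f$ one would need a different splitting. One small simplification worth noting: the Cauchy--Schwarz step is avoidable, since one could directly read off the first-moment bound $\sum_i|B_i|=O(mn/r)$ from \cref{thm:hitting-ball} with $p=1$; you stuck with $p=2$ as the statement suggests, which is fine, but then had to recover the $\ell^1$ bound from the $\ell^2$ one.
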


\begin{proof}
    Let $g(x)=\sqrt{x}\log\sqrt{x}$ and $h(x)=x^2$. Note that $g$ is increasing and concave (when $x\ge1$), and $f(x)=g(h(x))$. Running \cref{alg:hit-ball} with $p=2$, we have $$\frac{1}{m}\sum_{i=1}^m h(|B_i|) \le h(n/r).$$
    So, $$\frac{1}{m}\sum_{i=1}^{m} f(|B_i|) = \frac{1}{m}\sum_{i=1}^{m} g(h(|B_i|)) \le g\left(\frac{1}{m}\sum_{i=1}^m h(|B_i|)\right) \le g(h(n/r)) = f(n/r).$$
\end{proof}

We remark that the result can be similarly extended to other polynomially large cost functions.
\section{Preliminaries for the Applications}

In the remaining part of the paper, we consider a weighted undirected graph $G$ with $n$ vertices, $m$ edges and non-negative edge weights.
For each edge $(u,v)$, let $l(u,v)$ denote the weight (length) of the edge.
For any vertices $u$ and $v$, let $d(u,v)$ denote the distance between $u$ and $v$.
For any vertex $u$ and any set of vertices $S$, let $d(u,S)=\min_{v\in S}\{d(u,v)\}$.

\paragraph{Bounded-Degree Graph.} We assume that the degree of each vertex is $O(m/n)$. This can be accomplished by splitting the vertices of too large degrees so that in the new graph, the number of vertices is still $O(n)$ and the number of edges is still $O(m)$.

\paragraph{Comparison-Addition Model.} We consider the comparison-addition model, where the weights are subject to only comparison and addition operations. Each comparison and addition takes unit time.

\paragraph{Heap.} We will use heaps that support each initialization, insertion and decrease-key operation in $O(1)$ amortized time and each extract-min operation in $O(\log|H|)$ time where $|H|$ is the size of the heap, e.g. the Fibonacci heap \cite{fredman1987fibonacci}.

\paragraph{Dijkstra's Algorithm.} The Dijkstra's algorithm \cite{dijkstra1959note} starts at a source vertex $s$ and repeatedly finds the closest unvisited vertex from a heap. In the $i$-th step, the heap only contains the neighbors of the $i$ closest vertices, hence its size is at most $im/n$. Therefore, it spends $O(m/n+\log(im/n))$ time in the $i$-th step.

\section{Derandomizing Undirected Single-Source Shortest Paths}
\label{sec:sssp}

\subsection{Bundle Dijkstra}

In the single-source shortest path problem, given a source vertex $s$, we need to compute $d(s,v)$ for each vertex $v$. The randomized algorithm in \cite{duan2023randomized} solves the problem in $O(m\sqrt{\log n\log\log n})$ time. The algorithm consists of two stages: bundle construction and bundle Dijkstra.

\paragraph{Constant-Degree Graph.} For simplicity, \cite{duan2023randomized} further assumes that each vertex has a constant degree. To achieve this, the number of vertices is increased to $O(m)$. We will follow the same assumption.
Note that this assumption is not necessary for their algorithm (See \cite{duan2023randomized} Section 5). Without it, the time complexity can be slightly improved to $O(\sqrt{mn\log n}+n\sqrt{\log n\log\log n})$ when $m=\omega(n)$ and $m=o(n\log n)$, and our derandomization also holds.

\paragraph{Bundle Construction.} In this stage, we need to select a subset of vertices $R$ as centers. Then for each vertex $v$, we run a partial Dijkstra's algorithm from $v$ until we reach a center. Let $B(v)$ be the set of vertices that are reached before the center during the partial Dijkstra's algorithm. The time complexity is $O(\sum_v|B(v)|\log|B(v)|)$ on a constant-degree graph. \cite{duan2023randomized} selects the centers at random, and this is the only part we need to derandomize.

\paragraph{Bundle Dijkstra.} In this stage, given $R$ and all $B(v)$, the bundle Dijkstra algorithm (\cite{duan2023randomized} Section 3.2) can solve the single-source shortest path problem in $O(|R|\log n+\sum_v|B(v)|)$ time deterministically.

~

In total, the time complexity is $O(|R|\log n+\sum_v|B(v)|\log|B(v)|)$. When the centers are selected uniformly at random, we have $\E[|B(v)|\log|B(v)|]=(m/|R|)\log(m/|R|)$. Setting $|R|=m\sqrt{\log\log n}/\sqrt{\log n}$ gives us the time complexity $O(m\sqrt{\log n\log\log n})$.

\subsection{Derandomization}

The bundle construction can be modeled by our hitting-growable-balls problem in \cref{sec:hitting-balls}. Each $B(v)$ corresponds to one ball, which is initially empty. When \cref{alg:hit-ball} asks to grow $B(v)$, we take one more step in the partial Dijkstra from $v$, which will include one more vertex in $B(v)$.
In the end, \cref{alg:hit-ball} selects $r$ centers that hit all the balls $B(v)$, which means each partial Dijkstra has reached a center.

The total time complexity for growing the balls is $O(\sum_v|B(v)|\log|B(v)|)$, the same as the original bundle construction. So we let the cost function be $f(x)=x\log x$. By \cref{cor:hit}, we have $\sum_v|B(v)|\log|B(v)|=O(m(n/r)\log(n/r))$.
Replacing the randomized bundle construction in \cite{duan2023randomized} by this deterministic bundle construction and setting $r=m\sqrt{\log\log n}/\sqrt{\log n}$, we get a deterministic algorithm with the same time complexity $O(m\sqrt{\log n\log\log n})$, proving \cref{thm:sssp}.
\section{Derandomizing Approximate Distance Oracles}
\label{sec:do}

\subsection{Thorup-Zwick Oracle}

An approximate distance oracle is a data structure which is able to approximately answer any pairwise distance query. If the estimated distance $\hat{d}(u,v)$ returned by the oracle always satisfies $d(u,v)\le\hat{d}(u,v)\le k\cdot d(u,v)$, we say that the oracle is of stretch $k$.

The Thorup-Zwick approximate distance oracle \cite{thorup2005approximate} works as follows.
Let $A_0,A_1,\dots,A_k$ be sets of vertices such that $V=A_0\supseteq A_1\supseteq\cdots\supseteq A_k=\emptyset$. Consider each level $i\in[0,k-1]$. For each vertex $v$, let $B_i(v)$ be the set of vertices in $A_i$ which are closer to $v$ than any vertex in $A_{i+1}$, i.e. $B_i(v)=\{u\in A_i \mid d(v,u)<d(v,A_{i+1})\}$\footnote{In particular, we define $d(v,A_k)=d(v,\emptyset)=\infty$.}. The Thorup-Zwick oracle stores all $B_i(v)$ and has size $O(\sum_{i=0}^{k-1}\sum_v|B_i(v)|)$. \cite{thorup2005approximate} show that it can answer each query of stretch $2k-1$ in $O(k)$ time.

Since the query part is already deterministic, we focus on the preprocessing time and the size. \cite{thorup2005approximate} let $A_{i+1}$ contain each element in $A_i$ with probability $n^{-1/k}$, so that each $B_i(v)$ has an expected size of $O(n^{1/k})$. Therefore the size of the data structure is $O(kn^{1+1/k})$. In each level, to compute all $B_i(v)$, they run a partial Dijkstra's algorithm from each vertex in $A_{i+1}$, so that only the vertices in each $B_i(v)$ are visited. The preprocessing time is therefore $O(kn^{1+1/k}(m/n+\log n))=O(kn^{1/k}(m+n\log n))$.
However, this method requires us to know the entire $A_{i+1}$ before computing the balls, so it does not work well with our derandomization method.

\subsection{Derandomization}

Consider each level $i\in[0,k-2]$. Suppose that $A_0,\dots,A_i$ have been determined and we are going to select $n^{-1/k}|A_i|$ vertices from $A_i$ as $A_{i+1}$. This can be modeled by our hitting-growable-balls problem in \cref{sec:hitting-balls}. Each $B_i(v)$ corresponds to one ball, which is initally empty. However, now the set of vertices (in this hitting-growable-balls problem) is $A_i$ instead of the vertex set $V$ of the graph, which means it is non-trivial to grow a ball quickly.

In fact, the previous derandomization \cite{roditty2005deterministic} faced the same problem. They proposed an algorithm, which, from our point of view, can grow all the balls (including those that have been hit) together in the desired time.

Let $S_{i,j}(v)$ denote the (ordered) set of $j$ vertices in $A_i$ that are closest to $v$. Ties are broken by some identical total order of vertices, e.g. by the indices. $S_{i,1}(v)$ for all $v$ can be easily computed in $O(m+n\log n)$ time. Then the following algorithm can compute all $S_{i,j+1}(v)$ from all $S_{i,j}(v)$.

\begin{algorithm}[H]
    \caption{(restate from the proof of \cite{roditty2005deterministic} Theorem 2)}
    \label{alg:grow-ball}
    \hspace*{\algorithmicindent} \textbf{Input:} undirected graph $G=(V,E)$, step $j$, sets $S_{i,j}(v)$ for all $v\in V$. \\
    \hspace*{\algorithmicindent} \textbf{Output:} sets $S_{i,j+1}(v)$ for all $v\in V$.
    \begin{algorithmic}[1]
        \State Add a new source vertex $s$ to $G$.
        \For{each edge $(u,v)\in E$ in each direction}
            \If{$S_{i,j}(u)=S_{i,j}(v)$}
                \State Keep the edge $(u,v)$.
            \Else
                \State Let $w$ denote the first vertex in $S_{i,j}(u)\setminus S_{i,j}(v)$.
                \State Replace the edge $(u,v)$ by an edge $(s,v)$ of weight $d(w,u)+l(u,v)$ and label $w$.
            \EndIf
        \EndFor
        \State Run Dijkstra's algorithm from $s$.
        \For{each vertex $v$}
            \State $S_{i,j+1}(v)\gets S_{i,j}\cup\{\text{the label of the first edge of the shortest path to $v$}\}$.
        \EndFor
        \State \Return all $S_{i,j+1}(v)$.
    \end{algorithmic}
\end{algorithm}

\begin{lemma}[\cite{roditty2005deterministic} Theorem 2]
\label{lem:grow-all-balls}
    For any $j\ge 1$, \cref{alg:grow-ball} computes all $S_{i,j+1}(v)$ from all $S_{i,j}(v)$ in $O(m+n\log n)$ time by performing a single-source shortest paths computation on a graph with $O(n)$ vertices and $O(m)$ edges.
\end{lemma}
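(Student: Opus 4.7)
The plan is to establish three claims: that the modified graph has $O(n)$ vertices and $O(m)$ edges so the Dijkstra call runs within the claimed bound; that for every vertex $v$, the $s$-to-$v$ distance in the modified graph equals $d(v, w^*(v))$, where $w^*(v)$ denotes the $(j+1)$-th closest vertex in $A_i$ to $v$ under a fixed tie-breaking order; and that the label of the first edge on the shortest $s\to v$ path is exactly $w^*(v)$, which is precisely the element appended to $S_{i,j}(v)$ to form $S_{i,j+1}(v)$.

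The size and running-time part is immediate. The modified graph has $n+1$ vertices (the original vertices together with $s$) and at most $2m$ edges, since each original undirected edge is processed once per direction and each direction contributes a single edge, either kept as $(u,v)$ or redirected to $(s,v)$. A Fibonacci-heap Dijkstra from $s$ then runs in $O(m + n\log n)$ time, and the $O(m)$ passes that build the modified graph and relabel the outputs are lower-order. I will assume throughout that along with each set $S_{i,j}(v)$ we have the distances from $v$ to its members, which is the natural invariant maintained across iterations and is what makes the edge weight $d(w,u) + l(u,v)$ computable in constant time.

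For correctness I would prove matching lower and upper bounds on the $s\to v$ distance. For the lower bound, any $s\to v$ path in the modified graph must begin with a redirected edge $(s, v')$ of weight $d(w,u)+l(u,v')$ and label $w$ arising from some original directed edge $u\to v'$, and then continue via kept original edges; because a kept edge $(x,y)$ forces $S_{i,j}(x)=S_{i,j}(y)$, the set $S_{i,j}$ stays constant along the tail, so the label satisfies $w\in A_i \setminus S_{i,j}(v)$, and the total path length is at least $d(w,u)+l(u,v')+d(v',v) \ge d(w,v) \ge d(v, w^*(v))$. For the upper bound, I would take a shortest $w^*(v)\to v$ path $P = q_0, q_1, \dots, q_\ell$ in $G$ and let $t$ be the largest index at which $S_{i,j}$ changes along $P$; such a $t$ exists because $w^*(v) \in S_{i,j}(q_0)$ but $w^*(v) \notin S_{i,j}(q_\ell)$. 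After $q_t$ the set stays equal to $S_{i,j}(v)$, so the replacement edge $(s, q_t)$ followed by the kept edges $q_t \to \cdots \to q_\ell$ yields a valid $s\to v$ path whose length is $d(w, q_{t-1}) + l(q_{t-1}, q_t) + d(q_t, v)$, where $w$ is the first element of $S_{i,j}(q_{t-1})\setminus S_{i,j}(q_t)$.

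The main obstacle will be showing $d(w, q_{t-1}) \le d(w^*(v), q_{t-1})$, which is what is needed to bound the constructed path length by $d(v, w^*(v))$. I would split into two cases. If $w^*(v) \in S_{i,j}(q_{t-1})$, then $w^*(v)$ itself lies in $S_{i,j}(q_{t-1})\setminus S_{i,j}(q_t)$, so the bound follows from $w$ being the closest member of that set to $q_{t-1}$. If instead $w^*(v) \notin S_{i,j}(q_{t-1})$, then $w \in S_{i,j}(q_{t-1})$ while $w^*(v)$ lies outside, and since $S_{i,j}(q_{t-1})$ holds the $j$ closest elements of $A_i$ to $q_{t-1}$ the same inequality holds. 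Combining both bounds, the shortest $s\to v$ distance equals $d(v, w^*(v))$ and, under a consistent tie-breaking rule, its first-edge label is $w^*(v)$, giving the lemma as stated.
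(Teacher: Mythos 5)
Your proof is correct and follows essentially the same route as the cited source \cite{roditty2005deterministic} (the paper itself gives no proof of this lemma, deferring entirely to that reference): bound the size of the auxiliary graph, then show the $s$-to-$v$ distance there equals $d(v,w^*(v))$ via matching lower and upper bounds, with the two-case comparison $d(w,q_{t-1})\le d(w^*(v),q_{t-1})$ at the crux. The only soft spot is the final tie-breaking step (degenerate ties, e.g.\ with zero-weight edges), which you explicitly flag and which the source likewise handles only by positing a consistent total order, so this does not constitute a gap relative to the cited argument.
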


If all the balls have the same size and we want to grow all of them by $1$, we can call \cref{alg:grow-ball}. By \cref{lem:grow-all-balls}, the amortized time for each single grow operation will be $O(m/n+\log n)$, which matches with the randomized algorithm. However, in \cref{alg:hit-ball}, we only want to grow all unhit balls. Nevertheless, by observing some key properties, we can ``partially'' run \cref{alg:grow-ball} to achieve it.

\begin{lemma}
\label{lem:grow-unhit-balls}
    When running \cref{alg:hit-ball} for our current problem of selecting $A_{i+1}$ from $A_i$, each grow operation can be done in $O(m/n+\log n)$ amortized time.
\end{lemma}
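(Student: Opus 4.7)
The plan is to implement the grow operations of \cref{alg:hit-ball} by iteratively invoking (modifications of) \cref{alg:grow-ball}, exploiting the phase structure of \cref{alg:hit-ball}. Growing proceeds in phases: Phase $0$ (line 3 of \cref{alg:hit-ball}) grows all $n$ balls to a common size $b_0 = \Theta(|A_i|/r)$, and each subsequent Phase $\ell \ge 1$ (one iteration of the while loop) grows only the currently-unhit balls from size $b_{\ell-1}$ to $b_\ell = 2 b_{\ell-1}$, with the unhit set $U_\ell$ shrinking by at least a factor of $2^{p+1}$ per phase (as in the proof of \cref{thm:hitting-ball}).

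First, I would handle Phase $0$ by directly calling \cref{alg:grow-ball} $b_0$ times, growing $S_{i,j}(v)$ for all $v$ synchronously from $j = 0$ to $j = b_0$. By \cref{lem:grow-all-balls}, each call costs $O(m + n\log n)$, so Phase $0$ takes $O(b_0(m + n\log n))$ in total, which amortizes to $O(m/n + \log n)$ per grow. Next, for each Phase $\ell \ge 1$, I would design a modified version of \cref{alg:grow-ball} that updates $S_{i,j_v}(v)$ only for $v \in U_\ell$ (where all unhit vertices share the same synchronized level $j_v$), leaving hit vertices' $S$ frozen. This modification restricts the auxiliary graph's edges to those $(u,v) \in E$ (in either direction) with $v \in U_\ell$ --- at most $O(|U_\ell| \cdot m/n)$ such edges by the bounded-degree assumption --- and restricts Dijkstra's exploration to the $O(|U_\ell|)$ target vertices, so each iteration runs in $O(|U_\ell|(m/n + \log n))$ time. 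Summing the $b_{\ell-1}$ iterations of each Phase $\ell$ over all phases, the geometric decay of $|U_\ell|$ telescopes the total growing time to $O((n \cdot |A_i|/r)(m/n + \log n))$; dividing by the total number of grow operations (also $O(n \cdot |A_i|/r)$ from \cref{thm:hitting-ball}) yields the desired $O(m/n + \log n)$ amortized cost per grow.

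The main obstacle will be establishing correctness of the modified algorithm. The original \cref{alg:grow-ball} relies on every vertex's $S_{i,j}$ being synchronized at level $j$, so that the shortcut edges built from $S(u) \setminus S(v)$ correctly encode distances to the next $A_i$-element. In our setting, hit vertices $u$ have $S_{i,j_u}(u)$ frozen at earlier levels $j_u \le j$, so shortcut edges on hit-to-unhit boundary edges rely on stale information. I plan to show that, despite this, for each $v \in U_\ell$ and its target $A_i$-element $w$ (the $(j+1)$-th closest to $v$), Dijkstra from the super-source $s$ in the modified graph still finds a path to $v$ of weight $d(w,v)$ --- assembled by composing the shortcut edges at consecutive hit-to-unhit boundaries of the original shortest $w$-to-$v$ path in $G$ with kept edges within $U_\ell$. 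The critical structural fact enabling this argument is that after Phase $0$, every vertex already has $S$-level at least $b_0$, so hit vertices' stale $S$-sets still contain the ``close'' portion of their neighborhoods; combined with the distance-monotonicity property underlying \cref{alg:grow-ball}'s original correctness, each ``first in $S(u) \setminus S(v)$'' candidate at a boundary edge should yield a shortcut whose weight is appropriately bounded by $d(w,v)$.
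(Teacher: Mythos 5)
Your implementation plan and time accounting match the paper: you run a restricted version of \cref{alg:grow-ball} on the unhit vertices only (using that all unhit balls stay synchronized at a common size $j$), touch only the $O(t\cdot m/n)$ edges incident to the $t$ unhit vertices, and charge $O(m/n+\log n)$ per grow. The gap is in correctness, which you leave as a plan resting on an unproven crux (``each `first in $S(u)\setminus S(v)$' candidate at a boundary edge \emph{should} yield a shortcut whose weight is appropriately bounded by $d(w,v)$''), and the ingredients you cite for it (every vertex has level at least $b_0$ after Phase $0$; ``distance monotonicity'') are not the facts that make it work. In particular, the target $w$ (the $(j+1)$-st nearest $A_i$-vertex to an unhit $v$) need not belong to the frozen set $B_i(u)$ of a hit boundary vertex $u$, so your proposed composition along the original shortest $w$-to-$v$ path does not obviously produce a path of weight at most $d(w,v)$; you would essentially have to redo the entire correctness induction of \cite{roditty2005deterministic} in a ``stale-information'' setting, and also justify why restricting Dijkstra to the unhit vertices cannot lose shorter paths that pass through hit vertices in the auxiliary graph.

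The paper avoids all of this with one observation that your proposal is missing: the frozen ball of a hit vertex contains a center, while $S_{i,j}(v)$ of an unhit vertex $v$ contains none. This gives two exact statements. First, for any edge between a hit $u$ and an unhit $v$ we have $S_{i,j}(u)\neq S_{i,j}(v)$, so in the true auxiliary graph of \cref{alg:grow-ball} every hit--unhit edge is replaced by a shortcut from $s$; hence hit vertices are disconnected from unhit ones and can be discarded without changing any distance to an unhit vertex -- this is what licenses restricting the graph and the Dijkstra run. Second, although $S_{i,j}(u)$ is unknown for a hit $u$, its frozen prefix $B_i(u)=S_{i,j'}(u)$ ($j'\le j$) already contains an element outside $S_{i,j}(v)$ (the center), so the first element of $S_{i,j}(u)\setminus S_{i,j}(v)$ lies inside that prefix, and line 6 of \cref{alg:grow-ball} can be executed \emph{exactly} using $B_i(u)$. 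Together these show the partial run is a faithful simulation of \cref{alg:grow-ball}, so correctness is inherited wholesale from \cref{lem:grow-all-balls} and no new shortest-path argument (approximate or otherwise) is needed. Without this observation your correctness step does not go through as sketched; with it, the ``stale information'' issue you flag disappears entirely.
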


\begin{proof}
    Note that during \cref{alg:hit-ball}, all the unhit balls have the same size, and we always grow them together. Suppose that there are $t$ unhit balls now, and they all have size $j$, and we want to grow all of them by $1$. Then we know $B_i(v)=S_{i,j}(v)$ for all unhit $B_i(v)$. Let's try to run \cref{alg:grow-ball}. Let $G'$ denote the graph constructed by \cref{alg:grow-ball} (which we won't actually construct).

    Notice that, if $B_i(u)$ is hit and $B_i(v)$ is unhit, then $S_{i,j}(u)\neq S_{i,j}(v)$ since $S_{i,j}(u)$ contains a center but $S_{i,j}(v)$ does not. So in $G'$, there are no edges between the vertices whose balls are hit and the vertices whose balls are unhit. Therefore, we can safely discard all vertices whose balls are hit. We only need to construct the partial graph for the $t$ vertices by examining edges incident to at least one of them.

    Consider each such edge $(u,v)$. If both $B_i(u)$ and $B_i(v)$ are unhit, then we just follow \cref{alg:grow-ball}. If $B_i(u)$ is hit and $B_i(v)$ is unhit, then we don't actually know $S_{i,j}(u)$. We only know $B_i(u)=S_{i,j'}(u)$ for some $j'\le j$. However, since $S_{i,j'}(u)$ will contain (at least) one center, which must not be in $S_{i,j}(v)$, we could still get the correct $w$ in line $6$.

    Since each vertex has degree $O(m/n)$, we only need to run Dijkstra's algorithm on a graph with $t+1$ vertices and $O(tm/n)$ edges, which can be done in $O(t(m/n+\log n))$ time as desired.
\end{proof}

Now, let's put everything together. We run \cref{alg:hit-ball} with $r=n^{-1/k}|A_i|$ and $p=1$. By \cref{thm:hitting-ball}, we have $|A_{i+1}|\le n^{-1/k}|A_i|$ and $\sum_v|B_i(v)|=O(n^{1+1/k})$. By \cref{lem:grow-unhit-balls}, the time we spend in this level is $O(n^{1+1/k}(m/n+\log n))=O(n^{1/k}(m+n\log n))$.
Note that in the last level $k-1$, since $A_k=\emptyset$ is fixed, we have $B_{k-1}(v)=A_{k-1}$. Since $|A_{k-1}|\le(n^{-1/k})^{k-1}|A_0|=n^{1/k}$, the above properties also hold for $i=k-1$.
Combining $k$ levels together, we construct the Thorup-Zwick oracle of size $O(kn^{1+1/k})$ in $O(kn^{1/k}(m+n\log n))$ deterministic time as desired, proving \cref{thm:do}.

\section*{Acknowledgments}

We thank Mikkel Thorup and the anonymous reviewers for helpful comments.

\bibliographystyle{alpha}
\bibliography{ref}



\end{document}